\documentclass[12pt]{article}

\usepackage{cmap}
\usepackage[T1]{fontenc}
\usepackage[utf8]{inputenc}
\usepackage{lmodern}
\usepackage{textcomp}
\usepackage{amsmath,amssymb,amsthm,mathtools}
\usepackage{geometry}
\usepackage{graphicx}
\usepackage{booktabs}
\usepackage[round,authoryear]{natbib}
\usepackage{setspace}
\usepackage{float}
\usepackage{xcolor}
\usepackage{hyperref}

\hypersetup{hypertexnames=false}

\newtheorem{proposition}{Proposition}
\newtheorem{corollary}{Corollary}
\newtheorem{remark}{Remark}

\newcommand{\QIG}{\mathcal F^{-1}_{IG}}
\newcommand{\FIG}{\mathcal F_{IG}}
\newcommand{\QGIG}{\mathcal F^{-1}_{GIG}}
\newcommand{\FGIG}{\mathcal F_{GIG}}

\title{\vspace{-3cm} An Explicit Solution to Black--Scholes Implied Volatility}
\author{Wolfgang Schadner\thanks{ University of Liechtenstein, \href{mailto:wolfgang.schadner@uni.li}{wolfgang.schadner@uni.li}.}\phantom{a}\thanks{I am grateful to Tomasz Dubiel-Teleszynski for many helpful discussions and to Michael Hanke for valuable feedback.
	}
}
\date{\today}

\begin{document}
	
\maketitle

\begin{abstract}
	Black-Scholes implied volatility is a quantile. The insight follows from the normalized option price being a probability on the variance scale, with the inverse Gaussian distribution providing the link. It enables analytically exact and explicit formulas for implied volatility in terms of existing quantile functions, with volatility on the left-hand side and only observable option inputs on the right-hand side. The result is not another approximation or asymptotic expansion. Instead, it rewrites the price-to-volatility map itself as a distributional transform. The representation gives implied volatility a first-passage-time interpretation, identifies variance as the natural coordinate of inversion, and reorganizes Greeks and no-arbitrage restrictions in the same variance-quantile coordinates. Numerically, the formula achieves machine precision faster than a state-of-the-art solver in the benchmark considered. The paper therefore provides a new coordinate system for computing, interpreting, and decomposing one of the central quantities in option markets.
\end{abstract}

\vspace{0.5em}
\noindent\textbf{Keywords:} option markets, implied volatility, Black--Scholes, inverse Gaussian\\
\noindent\textbf{JEL Classification:} G13; C58
\vspace{1em}

\newpage
	
\section{Introduction}
	
	The Black--Scholes formula is explicit from volatility to option price. Market practice, however, needs the reverse map, from an observed option price to the implied volatility that reproduces it. Since \cite{BlackScholes1973}, implied volatility has become the standard convention for quoting and comparing option prices across strikes, maturities, assets, and dates. It is used throughout derivatives markets, empirical asset pricing, volatility forecasting, risk management, option-return studies, tests of volatility risk premia, and quantification of forward-looking investor expectations. Implied volatility $\sigma$ is therefore not simply a model parameter but the standard coordinate in which option markets are measured, traded and managed. Yet the step from price to implied volatility has remained an analytical black box. For a normalized call price $c=C/D/F$, with $D$ the risk-free discount, $F$ the forward, and moneyness $k=\log K/F$ at strike $K$, the general form is
	\begin{align}\label{eq:q_structure_intro}
		\sigma(k,c)=\frac{2}{\sqrt{T}}Q_k^{-1}\left(\frac{1+c}{2}\right).
	\end{align}
	The inverse function $Q_k^{-1}(\cdot)$ is explicitly known for the at-the-forward strike $k=0$ as the simple normal quantile, $Q_0^{-1}(\cdot)=\Phi^{-1}(\cdot)$, but its general form for all strikes has remained an unsolved puzzle. The literature trying to solve it grew large over the last five decades. A first strand derives approximations, including \cite{BrennerSubrahmanyam1988,CorradoMiller1996,Li2008} and \cite{StefanicaRadoicic2017}. A second strand studies bounds to pin the problem down to a feasible region (\cite{Tehranchi2016,GatheralMaticRadoicicStefanica2017}) or large-strike asymptotics to reveal the structural restrictions \citep{Lee2004, GaoLee2014}. A third strand develops numerical or semi-analytical inversion methods, including the highly accurate iterative algorithms of \cite{Jackel2015,Jackel2024} and the series-based representation of \cite{CuiKirkbyNguyenTaylor2021}. These methods are accurate but represent the inverse algorithmically, iteratively, or through an infinite expansion. Beyond these constructive strands, \cite{Gerhold2012} showed that implied volatility must be \textit{some} non-$D$-finite function, ruling out a broad class of closed-form representations. This paper identifies \textit{which} non-$D$-finite function it is. Let $\QIG$ denote the quantile function of the inverse Gaussian distribution, the core insight is
	\begin{align}\label{eq:Q}
		Q_k^{-1}\left(\frac{1+c}{2}\right)=\frac{1}{\sqrt{\QIG\left(\frac{1-c}{\eta_k},\frac{2}{|k|},1\right)}},
	\end{align}
	with $\eta_k = \min(1,e^k)$ marking whether the option is out-of-the-money ($\eta_k=1$ for $k\ge0$) or in-the-money ($\eta_k=e^k$ for $k<0$). The formula places implied volatility on the left-hand side and only observable market data on the right-hand side. It identifies the distributional quantile hidden in the inverse Black--Scholes map and gives implied volatility a probabilistic interpretation. Equivalently, implied variance is a quantile of a generalized inverse Gaussian law. The option price is a probability, implied variance the corresponding quantile, variance-vega the density, and the static no-arbitrage restrictions are sign conditions on the same probability map. Variance is in this sense the natural coordinate of inversion, where the call price equals a standard probability law.
	
	Computationally, the outer Black--Scholes inversion is replaced by the evaluation of a specified inverse Gaussian quantile. This does not mean numerical work disappears, since quantile functions are themselves evaluated numerically. The point is that the inverse problem is transferred from the option-pricing map to a probability law, where it is better conditioned: variance is the natural inversion coordinate, and the conditioning gain is large precisely in the low-volatility regime where the volatility-space inverse is most fragile. In the numerical tests below, this representation recovers implied volatility to machine precision faster than \cite{Jackel2024}'s implementation on the benchmark considered \citep[a high-accuracy reference method, see also][]{Matic2020,Vollib}, with a single Halley refinement sufficing in place of two higher-order steps in volatility space.\\
	
	The remainder of the paper is organized as follows. Section~\ref{sec:imp_vol_is_quantile} derives the variance-quantile representation. Section~\ref{sec:implications} develops three interrelated implications. Section~\ref{sec:numerical} reports the numerical benchmark. And section~\ref{sec:conclusion} concludes. Together, this work demonstrates that the pricing map, its inverse, the Greeks, and the no-arbitrage restrictions are all expressions of the same variance-quantile structure.
	
\section{Implied Volatility is a Quantile}\label{sec:imp_vol_is_quantile}
	This section starts with the derivation of Black--Scholes implied volatility as an exact formula of existing quantile functions. Let $v=\sigma\sqrt{T}$ be the total implied volatility. The normalized Black--Scholes European call price is
	\begin{align}\label{eq:call_bs}
		c_{BS}(k,v)=\Phi\left(-\frac{k}{v}+\frac{v}{2}\right)-e^k\Phi\left(-\frac{k}{v}-\frac{v}{2}\right).
	\end{align}
	The case $k=0$ collapses to $c_{BS}(0,v)=2\Phi(v/2)-1$, hence the known solution for at-the-forward. It remains to consider $k\ne0$. First, take an out-of-the-money call, so $k>0$. Write $\FIG(\cdot;\mu,\lambda)$ for the inverse Gaussian distribution function with mean parameter $\mu$ and shape parameter $\lambda$, following the standard parametrization in \cite{ChhikaraFolks1989}. Its survival function is
	\begin{align}\label{eq:ig_survival}
		1-\FIG(x;\mu,\lambda)=\Phi\left(-\frac{\sqrt{\lambda x}}{\mu}+\frac{\sqrt{\lambda}}{\sqrt{x}}\right)-e^{2\lambda/\mu}\Phi\left(-\frac{\sqrt{\lambda x}}{\mu}-\frac{\sqrt{\lambda}}{\sqrt{x}}\right).
	\end{align}
	With $\lambda=1$, $\mu=2/k$, and $x=4/v^2$, this becomes
	\begin{align}\label{eq:call_ig}
		c_{BS}(k,v)=1-\FIG\left(\frac{4}{v^2};\frac{2}{k},1\right), \qquad k>0,
	\end{align}
	inverting Eq. \eqref{eq:call_ig} and using $\sigma= v/\sqrt{T}$ gives
	\begin{align}\label{eq:v_otm_call}
		\sigma(k,c)=\frac{2}{\sqrt{T}} \frac{1}{\sqrt{\QIG\left(1-c;\frac{2}{k},1\right)}}, \qquad k>0.
	\end{align}
	
	\paragraph{In-the-money and Put.}
	For $k<0$, the call $c$ is in the money. It is related to the out-of-the-money counterpart $\tilde{c}$ of same volatility and log-moneyness $-k$ by
	\begin{align}\label{eq:itm_to_otm}
		1-\tilde c=\frac{1-c}{K/F}.
	\end{align}
	Hence, a division by gross moneyness $K/F = e^k$ projects the ITM call back to the OTM region, $\eta_k$ in Eq.~\eqref{eq:Q} is just a formal handling of that.\footnote{To see this, use Eq.~\eqref{eq:call_bs} to express $\tilde c$ under a flipped sign of $k$. Note that $\Phi(\tfrac{k}{v}+\tfrac{v}{2}) = 1 - \Phi(-\tfrac{k}{v}-\tfrac{v}{2})$, so rearranging terms and substituting $c$ gives Eq.~\eqref{eq:itm_to_otm}. A smooth definition of $\eta_k$ is $\eta_k:=e^{(k-|k|)/2}$, which is identical to $\eta_k=\min(1,e^k)$.}\\
	
	The corresponding put formula follows from put--call parity,
	\begin{align}
		c-p=1-e^k.
	\end{align}
	For a put with normalized price $p=P/D/F$ and $k\neq 0$, the probability argument in the inverse Gaussian quantile is $(e^k-p)/\eta_k$. Hence the put implied volatility is 
	\begin{align}
		\sigma(k,p)=\frac{2}{\sqrt{T }} \frac{1}{\sqrt{\QIG\left(\tfrac{e^k-p}{\eta_k};\tfrac{2}{|k|},1\right)}}.
	\end{align}
	
	\paragraph{Pure Quantile.} 
	Besides $\sigma$, another important metric is total implied variance $w=\sigma^2 T$. It is commonly used for no-arbitrage calibrations and Dupire formula for local volatilities \citep{gatheral2006}. The above insight allows to express implied variance directly as a quantile. For this purpose, observe that we can establish an equivalent representation between the inverse Gaussian $IG$ and the generalized inverse Gaussian distribution $GIG$,
	\begin{align}
		1-\FIG(\tfrac{4}{w}; \tfrac{2}{k}, 1) = \FGIG(w; \tfrac12, \tfrac14, k^2)
	\end{align}
	(see \cite{Jorgensen1982,embrechts1983} for details). For simplicity I focus on the OTM branch. The generalized form is especially appealing because it shows that implied variance $w$ is a pure quantile of the $GIG$ family,
	\begin{align}
		w = \QGIG(c; \tfrac12, \tfrac{1}{4}, k^2), \qquad k>0.
	\end{align}
	Similarly, because $\sigma$ is an increasing function of $w$, implied volatility is also a quantile. Specifically,
	\begin{align}
		\sigma=\mathcal F_{\sqrt{Z_k/T}}^{-1}(c), \qquad k>0,
	\end{align}
	where $Z_k\sim GIG(\tfrac12,\tfrac14,k^2)$ is a strike-dependent $GIG$ random variable. Knowing the inverse Gaussian structure, it is straightforward to write the same inverse in alternative representations. Appendix~\ref{sec:appendix_alternatives} collects various equivalent forms. The incomplete Gamma, incomplete Bessel-$K$, tilted Levy, tilted Chi-square and tilted Gamma representations all invert directly in total variance. The tilted half-normal/Nakagami representation restates the same identity in total volatility by applying the square-root map to the variance law. Its tilt term depends on inverse variance, so its probabilistic content also remains nested in variance space. So, the volatility quantile is a projection of the variance quantile.
	
	The above result comes with two notable insights. First, implied volatility is not merely the output of a root-solving problem, but can be viewed as a quantile of a moneyness-specific distribution. Second, variance, not volatility, appears to be the natural coordinate of inversion. It is the coordinate in which the normalized call price equals a standard probability law. 
	
\section{Selected Implications}\label{sec:implications}
	
	The quantile representation has three related implications. It first gives normalized option prices a probabilistic interpretation on a variance scale. In the same coordinates, the Black--Scholes Greeks become derivatives of that probability law. These derivatives then provide the objects from which the calendar and butterfly no-arbitrage restrictions are built.
	
\subsection{Probabilistic Interpretation}\label{sec:prob_interp}
	The quantile formula gives the call price a variance-space probability interpretation. This differs from return-space readings. For example, delta can be viewed as the probability that the terminal return clears the strike hurdle. \citet{carrmadan2009} show that call prices can also be given a return-space probability interpretation under a suitable change of measure. The additional point here is that the normalized option price is also a probability level on a variance scale. Focusing on the OTM call branch $k>0$, and keeping $w=v^2$,
	\begin{align}
		c_{BS}(k,v)=\mathbb{P}\left(Y_k>\frac{4}{w}\right), 
		\qquad 
		Y_k\sim IG\left(\frac{2}{k},1\right).
	\end{align}
	Let $B_t$ denote standard Brownian motion and define
	\begin{align}
		\tau_k=\inf\left\{t>0:B_t+\frac{k}{2}t=1\right\}.
	\end{align}
	Then $\tau_k\sim IG(2/k,1)$, so $Y_k$ has the same law as the first hitting time of level $1$ by a Brownian motion with drift $k/2$ and unit diffusion volatility. Hence the normalized call price is the probability that this process has not yet hit level $1$ by variance-clock time $4/w$. Total implied variance $w$ thus plays the role of a speed in the first-passage picture---larger $w$ runs the variance clock faster against the fixed budget $4$, shortening the time available to hit level one. Equivalently, with $Z_k = 4/Y_k$,
	\begin{align}
		c_{BS}(k,v) = \mathbb{P}(Z_k \le w), \qquad Z_k \sim GIG(\tfrac{1}{2}, \tfrac{1}{4}, k^{2}).
	\end{align}
	Here $Z_k$ is the variance budget the drifted Brownian motion requires to reach level one, so the call price is the probability that this required budget falls within what the implied variance $w$ supplies. The market price fixes the probability level, and the implied total variance is the corresponding quantile of $Z_k$.
	
\subsection{Greeks in Variance-Space}\label{sec:greeks}
	
	The variance-space representation also gives a compact organization of the Black--Scholes Greeks. Focus on the OTM call branch $k>0$, the remaining cases follow from the transformations used above. Let $f_k$ denote the density of $Z_k\sim GIG(\tfrac12,\tfrac14,k^2)$, and write the primitive Black--Scholes price in variance coordinates as $c_{BS}(k,w)$. All partial derivatives below are taken with respect to $(k,w)$ with arguments suppressed.
	\begin{align}
		\text{variance-vega}\qquad \partial_w c_{BS} &= f_k(w),\\
		\text{variance-volga}\qquad \partial_{ww} c_{BS} &= f_k(w)\left(-\frac{1}{2w}-\frac18+\frac{k^2}{2w^2}\right),\\
		\text{variance-vanna}\qquad \partial_{kw} c_{BS} &= f_k(w)\left(\frac12-\frac{k}{w}\right),\\
		\text{moneyness sensitivity}\qquad \partial_k c_{BS} &= \int_0^w f_k(x)\left(\frac12-\frac{k}{x}\right)\,dx,\\
		\text{moneyness curvature}\qquad \partial_{kk} c_{BS} &= \int_0^w f_k(x)\left[\left(\frac12-\frac{k}{x}\right)^2-\frac1x\right]\,dx.
	\end{align}
	This does not introduce new Greeks and does not change their economic meaning, what changes is representation. The standard Black--Scholes sensitivities, traditionally written as combinations of $\Phi$ and $\phi$, become derivatives of a single probability law. Variance-vega is the $GIG$ density itself, and the higher local sensitivities are the same density weighted by terms at most quadratic in $1/w$ and $k/w$. This is cleaner than the corresponding volatility Greeks, where the square-root change of coordinates pulls in additional factors. The same hierarchy organizes the no-arbitrage conditions of the next subsection.
	
\subsection{No-arbitrage in Variance-Space}
	
	The variance-space representation rewrites the standard calendar and butterfly conditions in the coordinates generated by the quantile formula. Pricing, inversion, Greeks, and static no-arbitrage are not separate objects in this representation, they are different derivatives of the same variance-space law. The observed call surface is $c(k,T)$ and the total implied-variance surface is $w(k,T)$, linked by $c(k,T)=c_{BS}\left(k,\sqrt{w(k,T)}\right)$. At a fixed point, write $w=w(k,T)$. Subscripts on $c$ and $w$ denote derivatives of the observed surfaces. All derivatives of $c_{BS}$ below are variance-coordinate derivatives, evaluated at the current $(k,w)$, with arguments suppressed.
	
	\paragraph{Calendar.}
	Calendar no-arbitrage has a particularly direct interpretation. At fixed log-moneyness, maturity changes the variance threshold while the variance-space distribution remains fixed. Therefore, for $T_2>T_1$,
	\begin{align}
		c(k,T_2)-c(k,T_1)=\mathbb P\left(w(k,T_1)<Z_k\le w(k,T_2)\right).
	\end{align}
	Thus, calendar spread can now be read as a probability band in variance space. Equivalently, $c_T=\partial_w c_{BS}\,w_T=f_k(w)\,w_T$, and since $f_k(w)>0$, calendar no-arbitrage is equivalent to $w_T\ge0$, matching the standard monotonicity condition for total implied variance (cp.\ \cite{gatheral2006}).
	
	\paragraph{Butterfly.}
	Butterfly no-arbitrage admits two equivalent representations in variance coordinates, one read from the price side and one from the quantile side. In the forward direction, the chain rule along the observed surface expresses $c_k$ and $c_{kk}$ as combinations of the variance-Greeks and the smile derivatives $w_k$, $w_{kk}$. Substituting into the butterfly condition $c_{kk}\ge c_k$ yields
	\begin{align}
		\partial_{kk}c_{BS}-\partial_k c_{BS}+\left(2\partial_{kw}c_{BS}-\partial_w c_{BS}\right)w_k+\partial_{ww}c_{BS}\,w_k^2+\partial_w c_{BS}\,w_{kk}\ge0.
	\end{align}
	Butterfly no-arbitrage is therefore a sign condition on the same Greek hierarchy from above---moneyness sensitivity, moneyness curvature, variance-vega, variance-vanna, variance-volga---combined with the slope $w_k$ and curvature $w_{kk}$ of the variance smile (cp.\ \cite{GatheralJacquier2014}).
	
	In the inverse direction, the variance quantile map $q(k,u):=\QGIG(u;\tfrac12,\tfrac14,k^2)$ inverts the price-to-variance step, giving $w=q(k,c)$ with $q_u=1/f_k>0$. The chain rule in $(k,c)$-coordinates substituted into $c_{kk}\ge c_k$ yields
	\begin{align}
		w_{kk}\ge q_{kk}+2q_{ku}c_k+q_{uu}c_k^2+q_uc_k.
	\end{align}
	Butterfly no-arbitrage is now a quadratic lower bound on the implied-variance curvature $w_{kk}$ in terms of the observed call slope $c_k$, with coefficients given by partials of the variance quantile map. The two are equivalent restrictions on the same surface, read from opposite sides of the variance-space coordinate system. The forward form is natural when starting from a parametric variance smile, the inverse form is natural when starting from observed option prices.
	
\section{Numerical Evaluation}\label{sec:numerical}
	
	The numerical implementation is based on the $GIG$ quantile, where all options are projected to the OTM call branch. As benchmark I use \cite{Jackel2024}'s \textit{Let's Be Rational} (LBR), accessed through its native shared-library interface. It is a state-of-the-art implementation of Black--Scholes implied-volatility inversion that is both extremely fast and designed to recover implied volatility to essentially full double-precision accuracy within two iterations. The main structural difference is the inversion coordinate. LBR solves directly in total volatility $v$, while the present implementation solves in total variance $w$ and takes the square root only at the end.
	
	The numerical reason for using $w$ can be stated without introducing separate price maps. For fixed $k$, with $w=v^2$,
	\begin{align}
		\frac{\partial w}{\partial c}\Big|_k= \frac{1}{\partial_w c_{BS}}
		\qquad\text{and}\qquad
		\frac{\partial v}{\partial c} \Big|_k = \frac{1}{\partial_v c_{BS}}	= \frac{1}{2v\,\partial_w c_{BS}}.
	\end{align}
	The volatility-space inverse therefore carries the extra factor $1/(2v)$, which becomes large when $v$ is small. For example, at $v=20\%$ it is $2.5$, at $v=5\%$ it is $10$, and at $v=1\%$ it is $50$. This is not only a chain-rule artifact. The variance structure is already present in the classical Black--Scholes vega,
	\begin{align}
		\partial_v c_{BS}=\phi\left(-\frac{k}{v}+\frac{v}{2}\right)=\frac{e^{k/2}}{\sqrt{2\pi}}\exp\left(-\frac{v^2}{8}-\frac{k^2}{2v^2}\right).
	\end{align}
	The variance term $-k^2/(2v^2)$ sits inside the standard normal density. The same asymmetry appears in Newton refinement,
	\begin{align}
		\Delta v=-\frac{c_{BS}-c}{\partial_v c_{BS}}
		=-\frac{c_{BS}-c}{2v\,\partial_w c_{BS}}.
	\end{align}
	Halley and higher-order Householder methods inherit analogous inverse powers of $v$ through their higher derivatives. On the OTM branch, the inverse problem is therefore more naturally conditioned in variance space, especially for small total volatility.
	
	I implement the variance-space inversion as an optimized rational seed for the $GIG$ quantile followed by a single Halley refinement. Compared to the LBR pipeline it saves on two fronts. First, only one refinement step is used rather than two, sparing the more expensive part of the per-evaluation cost. Second, each refinement step is itself cheaper, because Halley is one order lower than the higher-order Householder iteration LBR uses, so fewer derivatives are evaluated per step. The quantile formula is evaluated on a randomized grid. Total volatility is drawn uniformly from $[0.01,2.00]$ and Black--Scholes call delta is drawn uniformly from $[0.01,0.99]$, giving $100{,}000$ random test cases. For each draw $(v,\Delta)$, the corresponding forward log-moneyness is obtained from the Black--Scholes delta relation,
	\begin{align}
		k=v\left(\frac{v}{2}-\Phi^{-1}(\Delta)\right).
	\end{align}
	The normalized call price is then generated from Eq.~\eqref{eq:call_bs}. On this random grid I examine both recovery accuracy and computation speed. Accuracy is measured by applying each inversion method to the generated price and comparing the recovered total volatility with the original input value. Speed is measured on repeated evaluations of that grid. One timing run consists of evaluating the $100{,}000$ random cases $10$ times, corresponding to $1{,}000{,}000$ implied-volatility evaluations per method. Timings are recorded over $10$ runs, so the full timing experiment comprises $10{,}000{,}000$ evaluations per method. The experiment was run on an 11th Gen Intel Core i5-1145G7 processor at 2.60 GHz, using GCC under Linux/WSL2, with both routines evaluated as native compiled C++ code. The code used for the numerical experiments will be made available as online supplementary material upon publication.
	
	On the randomized accuracy test, both methods recover total volatility to near machine precision. The median absolute recovery errors are $1.39\times 10^{-16}$ for the explicit formula and $1.67\times 10^{-16}$ for LBR, while the maximum absolute errors are $1.33\times 10^{-15}$ and $1.55\times 10^{-15}$, respectively. On the speed test, the explicit formula requires $59.49$ nanoseconds per evaluation at the median run, compared with $180.03$ nanoseconds for LBR, implying a median speed ratio of about $180.03/59.49 \approx 3.03$. Using the maximum timing observed across runs yields $63.16$ nanoseconds versus $182.78$ nanoseconds. The LBR timing is also very close to the value in \cite{Jackel2024}, where it is benchmarked at about $180$ nanoseconds per implied-volatility evaluation. Table~\ref{tab:speed_accuracy} summarizes the random-grid results, and Figure~\ref{fig:recovery_speed} displays the error and timing distributions. 
	
	\begin{figure}[htbp]
		\centering
		\includegraphics[width=\textwidth]{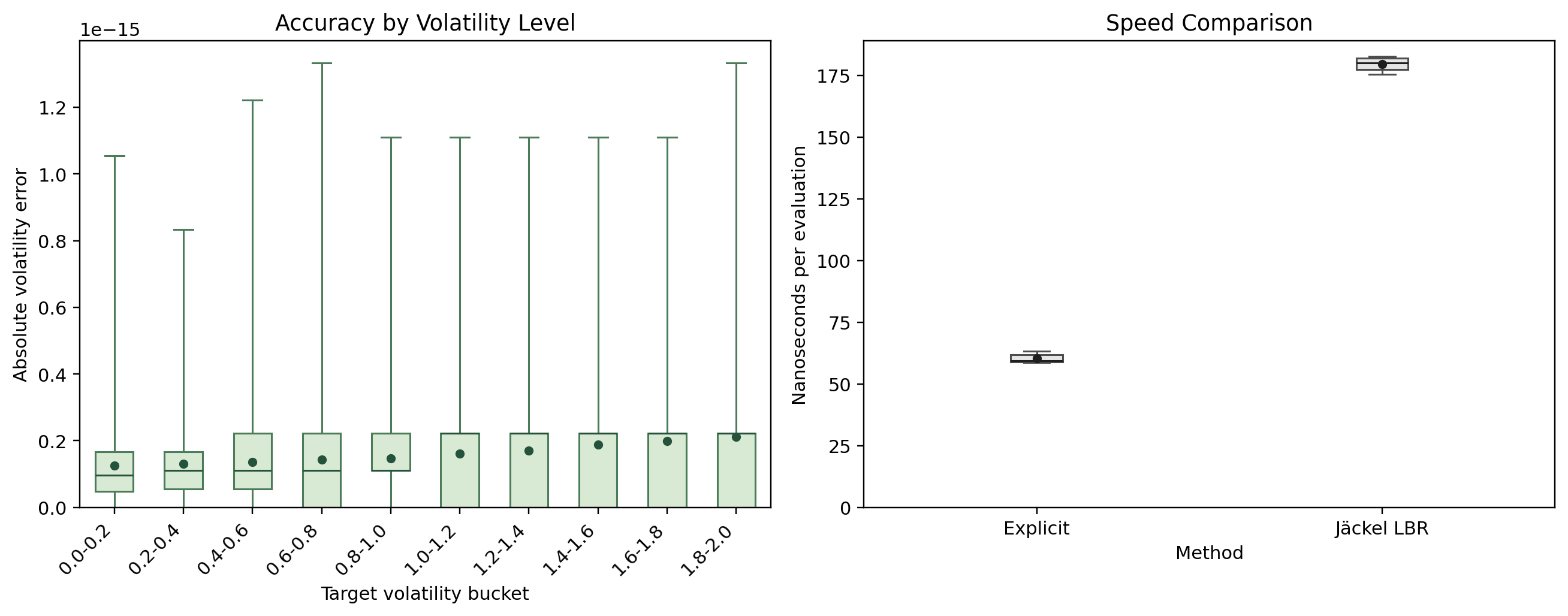}
		\caption{Random-grid benchmark summary. The left panel reports boxplots of the absolute recovery error of the explicit formula across volatility buckets. The right panel reports the run-level timing distribution for the explicit formula and \cite{Jackel2024}'s \textit{Let's be Rational} benchmark on the same random grid.}
		\label{fig:recovery_speed}
	\end{figure}
	
	\begin{table}[H]
		\centering
		\caption{Speed and recovery-error comparison on the random delta grid. Speed is measured over 10 timing runs, each based on 1,000,000 implied-volatility evaluations per method.}
		\label{tab:speed_accuracy}
		\begin{tabular}{lccccccc}
			\toprule
			& \multicolumn{2}{c}{Speed} & \multicolumn{5}{c}{Recovery error} \\
			& \multicolumn{2}{c}{ [ns/eval.]} & \multicolumn{2}{c}{[abs.]} & & \multicolumn{2}{c}{[rel.]}\\
			\cmidrule(lr){2-3}\cmidrule(lr){4-5}\cmidrule(lr){7-8}
			& median & max  & median & max & & median  & max  \\
			\midrule
			Explicit & 59.49 & 63.16 & $1.39\times 10^{-16}$ & $1.33\times 10^{-15}$ & & $1.59\times 10^{-16}$ & $4.23\times 10^{-14}$ \\
			J\"ackel LBR & 180.03 & 182.78 & $1.67\times 10^{-16}$ & $1.55\times 10^{-15}$ && $1.68\times 10^{-16}$ & $2.68\times 10^{-14}$ \\
			\bottomrule
		\end{tabular}
	\end{table}
	
	To complement the randomized experiment, I also report a deterministic fixed-grid check focused on tail accuracy. Total volatility is set to $v\in\{0.01,0.05,0.10,\ldots,2.00\}$ and call delta is restricted to the tail and near-tail levels $\Delta\in\{0.01,0.05,0.10,0.90,0.95,0.99\}$, giving 246 fixed-grid test cases. This grid is intentionally edge-focused, its role is to confirm that the explicit formula retains its accuracy in the low- and high-delta regions. On this tail-accuracy grid, the explicit formula attains a median absolute recovery error of $2.22\times 10^{-16}$ and a maximum relative error of $2.43\times 10^{-14}$, while LBR attains a median absolute recovery error of $2.20\times 10^{-16}$ and a maximum relative error of $2.32\times 10^{-14}$. The two methods are essentially indistinguishable on this grid, confirming that both achieve near double-precision accuracy at the tails.

\section{Conclusion}\label{sec:conclusion}
	
	This paper shows that Black--Scholes implied volatility admits an exact quantile representation in variance space. The normalized option price is a probability under an inverse Gaussian law, and inverting this identity gives implied volatility --- equivalently, total implied variance --- as the corresponding distributional quantile. The inverse Black--Scholes map thereby acquires a direct probabilistic meaning, and implied volatility becomes a distributional transform of the option price on a moneyness-specific variance scale.
	
	The representation shifts the computational task from inverting the Black--Scholes formula to evaluating a specified quantile function with more favorable conditioning. In the numerical experiment, the formula recovers implied volatility to machine precision faster than \cite{Jackel2024}'s reference implementation on the benchmark considered: median absolute recovery errors of $1.39\times 10^{-16}$ versus $1.67\times 10^{-16}$, at median evaluation times of $59.49$ versus $180.03$ nanoseconds. The speed difference reflects the implementation structure. Like LBR, the method starts from an optimized seed, but a single Halley refinement in variance replaces two higher-order refinement steps in volatility. The benchmark is not a universal ranking of algorithms, but it demonstrates that the quantile formula is not only analytic but numerically practical.
	
	The same variance-space law also organizes objects usually treated separately. Variance-vega is the generalized inverse Gaussian density, and the higher variance and moneyness sensitivities are obtained by weighting this density. The static no-arbitrage restrictions sit on the same derivative hierarchy. Calendar spreads become probability bands in variance space, while butterfly no-arbitrage combines the variance-space Greeks with the slope and curvature of the implied-variance smile. The contribution is therefore not a new option-pricing model and not a change in the economic content of Black--Scholes. It is a new coordinate system in which pricing, inversion, sensitivities, and no-arbitrage restrictions are expressions of the same variance-quantile structure.
	
	\clearpage
	\bibliographystyle{plainnat}
	\bibliography{literature}
	
	\newpage
	\appendix
\section{Appendix}
	\subsection{Alternatives to the inverse Gaussian representation}\label{sec:appendix_alternatives}
	
	The inverse Gaussian representation is useful not because the inverse Gaussian family is unique, but because it exposes additional analytical structure. The same Black--Scholes identity can be rewritten in several equivalent forms, each revealing a different aspect: the tilted $\chi^2$ / generalized incomplete gamma representation connects implied volatility to classical special-function families with developed numerical and asymptotic literatures; the incomplete Bessel-$K$ representation locates the non-$D$-finiteness of implied volatility at the truncation-and-inversion step; the tilted Levy representation exhibits the call price as a Girsanov-tilted first-passage probability of undrifted Brownian motion; and the tilted half-normal / Nakagami representation states the same identity in volatility coordinates. These representations are not independent --- each is obtained from the inverse Gaussian form by an elementary change of variable --- and the substantive probabilistic content sits in variance space. Throughout, I restrict attention to the OTM call branch $k>0$ and use $w=\sigma^2T$. Sections~\ref{sec:tilted_chi_gamma}--\ref{sec:tilted_levy} record variance-space representations, all of which invert directly in total variance. Section~\ref{sec:nakagami} records the corresponding volatility-space representation, obtained from these by a square-root change of variable.
	
	\subsubsection{Tilted Chi-square / generalized incomplete gamma}\label{sec:tilted_chi_gamma}
	
	The inverse Gaussian law admits an equivalent reading as an inverse-square exponentially tilted $\chi^2_1$, equivalently $\Gamma(\tfrac12,\tfrac12)$, distribution. For $k>0$, define
	\begin{align}
		\mathcal F_{\chi,k}(x):=\frac{e^{k/2}}{\sqrt{2\pi}}\int_0^x \omega^{-1/2}\exp\left\{-\frac{\omega}{2}-\frac{k^2}{8\omega}\right\}\,d\omega,\qquad x>0.
	\end{align}
	Then $\mathcal F_{\chi,k}$ is the distribution function of a $\chi_1^2$ law exponentially tilted by $\exp\{-k^2/(8\omega)\}$. Letting $\mathcal F_{\chi,k}^{-1}$ denote its inverse, for $k>0$,
	\begin{align}
		c=\mathcal F_{\chi,k}\left(\frac{w}{4}\right), \qquad w=4\,\mathcal F_{\chi,k}^{-1}(c).
	\end{align}
	The same object is, up to rescaling, a normalized lower generalized incomplete-gamma function of order $1/2$ in the sense of \cite{chaudhry1994,chaudhry1996}: the substitution $\omega \mapsto \omega/2$ in the integrand yields the identity $e^{-k/2}c=\mathcal F_{\Gamma,k}(w/8)$ with $\mathcal F_{\Gamma,k}(x):=\pi^{-1/2}\int_0^x \omega^{-1/2}\exp\{-\omega - k^2/(16\omega)\}\,d\omega$. This places the Black--Scholes inversion within the developed numerical, asymptotic, and series-expansion literature for generalized incomplete gamma functions, with the moneyness parameter $k$ entering as the upper tilt.
	
	\subsubsection{Incomplete Bessel-$K$}\label{sec:incomplete_bessel}
	
	The modified Bessel function of the second kind admits the integral representation
	\begin{align}
		\mathcal B_\alpha(z):=\frac12\int_0^\infty \omega^{\alpha-1}\exp\left\{-\frac z2\left(\omega+\frac1\omega\right)\right\}\,d\omega,\qquad z>0,
	\end{align}
	with lower incomplete version obtained by truncating the upper limit,
	\begin{align}
		\mathcal B_{\alpha,z}(x):=\frac12\int_0^x \omega^{\alpha-1}\exp\left\{-\frac z2\left(\omega+\frac1\omega\right)\right\}\,d\omega,\qquad x>0.
	\end{align}
	Letting $\mathcal B_{\alpha,z}^{-1}$ denote the inverse in $x$, for $k>0$,
	\begin{align}
		c\,\mathcal B_{1/2}\left(\frac{k}{2}\right)=\mathcal B_{1/2,k/2}\left(\frac{w}{2k}\right), \qquad w=2k\,\mathcal B_{1/2,k/2}^{-1}\left(c\,\mathcal B_{1/2}\left(\frac{k}{2}\right)\right).
	\end{align}
	This representation is analytically informative for the non-$D$-finiteness question. The untruncated Bessel-$K_{1/2}$ is itself $D$-finite, and so is the integrand of $\mathcal B_{1/2,k/2}$; non-$D$-finiteness of implied volatility therefore cannot enter through the integrand but must enter through the combination of finite truncation and functional inversion. The representation thus locates the obstruction identified in Appendix~\ref{sec:nonDfinite} at a specific operation rather than at the underlying special function.
	
	\subsubsection{Tilted Levy}\label{sec:tilted_levy}
	
	The inverse Gaussian law can also be viewed as an exponentially tilted Levy law. For $k>0$, define
	\begin{align}
		\mathcal F_{\ell,k}(x):=\frac{e^{k/2}}{\sqrt{2\pi}}\int_0^x \omega^{-3/2}\exp\left\{-\frac{1}{2\omega}-\frac{k^2\omega}{8}\right\}\,d\omega,\qquad x>0.
	\end{align}
	This is the distribution function of a Levy law exponentially tilted by $\exp\{-k^2\omega/8\}$. Letting $\mathcal F_{\ell,k}^{-1}$ denote its inverse, for $k>0$,
	\begin{align}
		c=1-\mathcal F_{\ell,k}\left(\frac{4}{w}\right), \qquad w=\frac{4}{\mathcal F_{\ell,k}^{-1}(1-c)}.
	\end{align}
	This representation has a clean probabilistic reading. The Levy distribution is the law of the first-passage time of standard Brownian motion without drift to a fixed level. The exponential tilt $\exp\{-k^2\omega/8\}$ is the Radon--Nikodym density that, under Girsanov's theorem, introduces the drift $k/2$. The normalized Black--Scholes call price is therefore a Girsanov-tilted first-passage probability for undrifted Brownian motion, complementing the drifted first-passage reading from the inverse Gaussian form in Section~3.1.
	
	\subsubsection{Tilted half-normal / Nakagami}\label{sec:nakagami}
	
	The previous representations all invert in total variance $w$. Applying the square-root change of variable to the tilted $\chi^2$ representation in Section~\ref{sec:tilted_chi_gamma} yields an equivalent representation in total volatility $v=\sqrt w$. The square root of a $\Gamma(\tfrac12,\tfrac12)$ variable follows the standard half-normal law, equivalently a Nakagami distribution with shape parameter $1/2$ and spread parameter $1$. For $k>0$, define
	\begin{align}
		\mathcal F_{N,k}(x):=e^{k/2}\sqrt{\frac{2}{\pi}}\int_0^x \exp\left\{-\frac{\nu^2}{2}-\frac{k^2}{8\nu^2}\right\}\,d\nu,\qquad x>0.
	\end{align}
	Then $\mathcal F_{N,k}$ is the distribution function of a standard half-normal, or Nakagami$(\tfrac12,1)$, law exponentially tilted by $\exp\{-k^2/(8\nu^2)\}$, and for $k>0$,
	\begin{align}
		c=\mathcal F_{N,k}\left(\frac{v}{2}\right), \qquad w=4\left(\mathcal F_{N,k}^{-1}(c)\right)^2,\qquad v=2\,\mathcal F_{N,k}^{-1}(c).
	\end{align}
	The implied total volatility is thus obtainable from a direct distributional transform. However, the tilt term $k^2/(8\nu^2)$ is itself a function of inverse variance, so the construction still carries the variance-space structure even when stated in volatility coordinates. This volatility quantile is therefore best read as a square-root projection of the variance-space probabilistic law rather than as an independent representation. This is consistent with the numerical observation in Section~\ref{sec:numerical} that the inverse is more stable in variance than in volatility, with the $1/(2v)$ factor in the volatility-space inverse sensitivity being the differential analogue of the same square-root projection.
	
\subsection{The Quantile is not D-finite}\label{sec:nonDfinite}
	
	\begin{proposition}\label{prop:nonDfinite}
		Fix $k>0$ and define
		\[
		w(c) := \mathcal{F}^{-1}_{GIG}\!\left(c;\tfrac{1}{2},\tfrac{1}{4},k^{2}\right), \qquad 0 < c < 1.
		\]
		Then $w$ is not $D$-finite as a function of $c$.
	\end{proposition}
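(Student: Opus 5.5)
The plan is to argue by contradiction, in the spirit of \cite{Gerhold2012}, using the behaviour of $w(c)$ at the endpoint $c\downarrow 0$. I will show that this behaviour cannot come from a solution of a linear ODE with polynomial coefficients.

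Suppose $w$ satisfies $\sum_{j=0}^{r} p_j(c)\, w^{(j)}(c)=0$ on $(0,1)$ with polynomials $p_j$ and $p_r\not\equiv 0$. The point $c=0$ is then either an ordinary point or a (regular or irregular) singular point of the equation.

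By the Fabry--Hukuhara--Turrittin theory (the Birkhoff--Trjitzinsky form used by Gerhold), every solution on a real half-neighbourhood $(0,\varepsilon)$ is asymptotic to a finite linear combination of formal terms
\[
\exp\bigl(P(c^{-1/s})\bigr)\, c^{\alpha}\, (\log c)^{j}\, \bigl(1+o(1)\bigr),
\]
with $P$ a polynomial, $s\in\mathbb N$, $\alpha\in\mathbb C$, and $j\in\{0,1,\dots,r-1\}$. The decisive property is that logarithms appear only with \emph{nonnegative} integer exponents. I will invoke this structure theorem as the known ingredient, restricted to real solutions on a sector containing the positive axis.

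The second step is the precise small-$c$ asymptotics of $w(c)$. I will use the tilted $\chi^2$ representation of Section~\ref{sec:tilted_chi_gamma}, $c=\mathcal F_{\chi,k}(w/4)$. Substituting $\omega=w/4$ shows that the integrand behaves like $\omega^{-1/2}\exp\{-k^2/(8\omega)\}$ as $\omega\to0$. Laplace's method then gives
\[
c = C\, w^{3/2}\, e^{-k^2/(2w)}\bigl(1+O(w)\bigr), \qquad w\downarrow0,
\]
for an explicit constant $C>0$. Because $\mathcal F_{GIG}$ is a strictly increasing continuous bijection onto $(0,1)$, $w(c)\downarrow 0$ as $c\downarrow0$. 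Taking logarithms and inverting yields
\[
w(c)=\frac{k^2}{2\log(1/c)}\Bigl(1+O\bigl(\tfrac{\log\log(1/c)}{\log(1/c)}\bigr)\Bigr).
\]
So $w$ decays like $1/\log(1/c)$: more slowly than every positive power $c^{\alpha}$, yet to zero.

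The final step shows that no combination of admissible terms can reproduce this. A finite sum of terms $e^{P(c^{-1/s})}c^{\alpha}(\log c)^j$ is dominated by the terms of maximal exponential and power growth. If its leading part is nonzero, it is comparable to $c^{\mathrm{Re}\,\alpha}|\log c|^{j}$ (times possible oscillations) for some $j\ge0$. Such an expression either stays bounded away from zero along a sequence, blows up, or decays at least like a positive power of $c$. None of these matches $1/\log(1/c)$. To make this rigorous I will group the terms by their exponential part and power exponent. Real-valuedness forces the oscillating factors $c^{i\,\mathrm{Im}\alpha}$ to cancel or to produce non-monotone behaviour, which contradicts monotone convergence $w\downarrow0$ at the rate $1/\log(1/c)$. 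Hence $w$ is not $D$-finite.

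I expect the main obstacle to be the last step, specifically justifying that an actual real solution on $(0,\varepsilon)$, not merely a formal one, has an asymptotic expansion of the stated form, and controlling cancellations among terms with equal growth. If this becomes delicate, the fallback is an alternative route. A $D$-finite $w$ would make $1/w$ satisfy a nonlinear but algebraic differential equation. Composing with $c=e^{-t}$ would then give a function $\sim 2t/k^2$ whose correction terms involve $\log t$. This contradicts the fact that $D$-finite functions in $c$ become, in the variable $t=\log(1/c)$, combinations of exponential-polynomial terms in $t$ with no $\log t$.
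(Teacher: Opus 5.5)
Your proposal is correct and follows essentially the same route as the paper. Both arguments take a small-$w$ tail estimate of the $GIG$ distribution function, $c\sim C\,w^{3/2}e^{-k^2/(2w)}$, invert it by bootstrap to $w(c)\sim k^2/(2\log(1/c))$, and then invoke the structure theorem for singularities of $D$-finite functions, which admits only nonnegative integer powers of $\log c$. Your handling of the last step (ordinary versus singular point, and ruling out cancellation among equal-growth oscillating terms) is more careful than the paper's one-line incompatibility claim, and the fallback via $t=\log(1/c)$ is not needed.
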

	
	\begin{proof}
		For $Z_{k} \sim GIG(\tfrac{1}{2}, \tfrac{1}{4}, k^{2})$ the density is
		\begin{align}
			f_{k}(w)
			&= \frac{e^{k/2}}{2\sqrt{2\pi}} \, w^{-1/2} \exp\!\left(-\frac{w}{8} - \frac{k^{2}}{2w}\right), \qquad w > 0.
		\end{align}
		
		\emph{Step 1: small-$w$ asymptotic of the cdf.}
		As $w \downarrow 0$, the factor $e^{-x/8} \to 1$ uniformly on $[0, w]$, so
		\begin{align}
			\mathcal{F}_{GIG,k}(w)
			&= \frac{e^{k/2}}{2\sqrt{2\pi}} \int_{0}^{w} x^{-1/2} \exp\!\left(-\frac{x}{8} - \frac{k^{2}}{2x}\right) dx \nonumber\\
			&\sim \frac{e^{k/2}}{2\sqrt{2\pi}} \int_{0}^{w} x^{-1/2} \exp\!\left(-\frac{k^{2}}{2x}\right) dx.
		\end{align}
		Substituting $u = k^{2}/(2x)$ yields
		\[
		\int_{0}^{w} x^{-1/2} \exp\!\left(-\frac{k^{2}}{2x}\right) dx
		= \frac{k}{\sqrt{2}} \int_{k^{2}/(2w)}^{\infty} u^{-3/2} e^{-u}\, du
		\sim \frac{2 \, w^{3/2}}{k^{2}} \, e^{-k^{2}/(2w)},
		\]
		where the last step uses the standard tail expansion $\int_{A}^{\infty} u^{-3/2} e^{-u}\, du \sim A^{-3/2} e^{-A}$ as $A \to \infty$. Hence
		\begin{align}\label{eq:cdf-asymptotic}
			\mathcal{F}_{GIG,k}(w)
			&\sim \frac{e^{k/2}}{k^{2}\sqrt{2\pi}} \, w^{3/2} \exp\!\left(-\frac{k^{2}}{2w}\right), \qquad w \downarrow 0.
		\end{align}
		
		\emph{Step 2: inversion by bootstrap.}
		Set $c = \mathcal{F}_{GIG,k}(w)$. Taking logarithms in \eqref{eq:cdf-asymptotic} gives
		\begin{align}\label{eq:log-relation}
			\log \frac{1}{c} = \frac{k^{2}}{2w} - \frac{3}{2}\log w + \log\!\left(\frac{k^{2}\sqrt{2\pi}}{e^{k/2}}\right) + o(1), \qquad w \downarrow 0.
		\end{align}
		A first-pass estimate ignoring all but the leading term yields $w \approx k^{2}/(2\log(1/c))$, so
		\[
		\log w \;\sim\; -\log \log (1/c), \qquad c \downarrow 0,
		\]
		which is $O(\log \log (1/c))$. Substituting back into \eqref{eq:log-relation},
		\[
		\log \frac{1}{c} = \frac{k^{2}}{2w} + O\!\left(\log\log \frac{1}{c}\right),
		\]
		and solving for $w$,
		\begin{align}\label{eq:w-asymptotic}
			w(c) = \frac{k^{2}}{2 \log(1/c)} \left(1 + O\!\left(\frac{\log \log (1/c)}{\log (1/c)}\right)\right), \qquad c \downarrow 0.
		\end{align}
		In particular, $w(c) \sim k^{2}/(2 \log(1/c))$ as $c \downarrow 0$.
		
		\emph{Step 3: incompatibility with $D$-finiteness.}
		A univariate $D$-finite function admits, at each finite singularity, an asymptotic expansion whose building blocks are products of $(c - c_{0})^{\alpha}$ for $\alpha \in \mathbb{C}$, exponentials $\exp(P((c-c_{0})^{-1/r}))$ for some polynomial $P$ and integer $r \geq 1$, and \emph{nonnegative integer} powers of $\log(c - c_{0})$; see \cite[Thm.~1]{FlajoletGerholdSalvy2005}. The asymptotic \eqref{eq:w-asymptotic} contains the factor $(\log(1/c))^{-1}$, a reciprocal logarithmic power, which is incompatible with this structure. Hence $w$ is not $D$-finite.
	\end{proof}
	
	\begin{corollary}
		Fix $T > 0$, $k > 0$. The implied-volatility map $c \mapsto \sigma(k, c)$ on $(0, 1)$ is not $D$-finite. The joint map $(k, c) \mapsto \sigma(k, c)$ on the OTM region $\{k > 0,\, 0 < c < 1\}$ is not $D$-finite either.
	\end{corollary}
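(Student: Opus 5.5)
The plan is to reduce both claims of the Corollary to Proposition~\ref{prop:nonDfinite} through closure properties of $D$-finite functions. The key facts are: if $f$ is $D$-finite, then so are $f^2$ and $a f$ for constant $a$; and $D$-finiteness in several variables is preserved under specialization of a variable to a generic constant.

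\emph{Univariate claim.} Fix $T>0$ and $k>0$. By the pure-quantile representation in Section~\ref{sec:imp_vol_is_quantile}, for $0<c<1$ we have $w(c)=\sigma(k,c)^2T$, where $w$ is the GIG quantile of Proposition~\ref{prop:nonDfinite}. Suppose, for contradiction, that $c\mapsto\sigma(k,c)$ were $D$-finite. Products of $D$-finite functions are $D$-finite, since the derivatives of $\sigma^2$ span a finite-dimensional space over $\mathbb{C}(c)$, namely the products of the basis elements for $\sigma$. Hence $\sigma^2$, and therefore $w=T\sigma^2$, would be $D$-finite, contradicting Proposition~\ref{prop:nonDfinite}.

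The same conclusion also follows directly from the asymptotics. Because $\sigma(k,c)=\sqrt{w(c)/T}$, Eq.~\eqref{eq:w-asymptotic} gives
\[
\sigma(k,c)\sim \frac{k}{\sqrt{2T}}\,\bigl(\log(1/c)\bigr)^{-1/2},\qquad c\downarrow 0.
\]
This contains a non-integer negative power of a logarithm, which Step~3 of Proposition~\ref{prop:nonDfinite} excludes for $D$-finite functions.

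\emph{Joint claim.} Suppose $(k,c)\mapsto\sigma(k,c)$ were $D$-finite on $\{k>0,\,0<c<1\}$, meaning it satisfies a nontrivial linear PDE in $\partial_c$ with coefficients in $\mathbb{C}(k,c)$ (and likewise in $\partial_k$). Restricting to a fixed $k_0>0$ would give a univariate $D$-finite function, contradicting the first part.

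\emph{Main obstacle.} The specialization step is the delicate one. The leading coefficient of the $c$-ODE, viewed as a polynomial in $(k,c)$, might vanish identically at the chosen $k=k_0$. I would avoid this by noting that only finitely many values of $k$ can make the leading coefficient vanish identically in $c$. So I would pick $k_0$ outside this finite set. Proposition~\ref{prop:nonDfinite} holds for every $k>0$, so such a $k_0$ is always available. Alternatively, one can divide the whole equation by the largest power of $(k-k_0)$ common to all coefficients before setting $k=k_0$, which keeps the specialized ODE nontrivial.
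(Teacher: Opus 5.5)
Your proposal is correct and takes essentially the same approach as the paper: closure of $D$-finite functions under products and scalar multiples makes $w=T\sigma^2$ $D$-finite, contradicting Proposition~\ref{prop:nonDfinite}, and the joint claim follows by specializing $k$. Your added care in the specialization step fills in a detail the paper's one-line argument skips. Choosing $k_0$ off the finitely many values where the leading coefficient vanishes identically, or dividing out the common power of $k-k_0$, both keep the specialized ODE nontrivial. Your asymptotic check $\sigma\sim \frac{k}{\sqrt{2T}}(\log(1/c))^{-1/2}$ is a valid redundant confirmation.
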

	
	\begin{proof}
		If $\sigma(k, \cdot)$ were $D$-finite at fixed $k$, then so would be $w(c) = T \, \sigma(k, c)^{2}$ by closure of the $D$-finite class under products and scalar multiplication, contradicting Proposition~\ref{prop:nonDfinite}. If the joint map were $D$-finite, specialization to any fixed $k > 0$ would yield a $D$-finite function of $c$ alone, again contradicting Proposition~\ref{prop:nonDfinite}.
	\end{proof}
	
	\begin{remark}[Boundary cases]
		The case $k = 0$ admits the closed-form solution $\sigma = (2/\sqrt{T})\,\Phi^{-1}((1+c)/2)$ and is excluded from the present argument. The ITM call and put cases reduce to the OTM call case treated above by the parity transformations of Section~2, so the non-$D$-finiteness conclusion extends to all four branches.
	\end{remark}
	
	\begin{remark}[Relation to \cite{Gerhold2012}]
		\cite{Gerhold2012} established non-$D$-finiteness of implied volatility through a univariate specialization $S = eK$, $c = (e-1)K + eK^{2}$, recovering the obstruction $F^{-1}(x) \sim 1/\sqrt{2 \log(1/x)}$ as $x \downarrow 0$. The variance-quantile representation makes the same obstruction transparent at every fixed log-moneyness $k > 0$: the inverse Gaussian tail produces, via \eqref{eq:cdf-asymptotic}--\eqref{eq:w-asymptotic}, a reciprocal-logarithmic asymptotic of the same forbidden type. The present proof therefore confirms and slightly sharpens \cite{Gerhold2012}, in that the obstruction is identified pointwise in $k$ rather than only along a one-dimensional curve in $(S, K, c)$-space.
	\end{remark}

\end{document}